\tikzstyle{customnode}=[circle,inner sep=2, minimum size =3 pt, line width = 1pt, draw=black, fill=black]
\newtheorem{theorem}{Theorem}
\newtheorem{lemma}[theorem]{Lemma}
\newtheorem{corollary}[theorem]{Corollary}
\theoremstyle{remark}
\newtheorem{claim}{Claim}[theorem]
\newcommand{\claimproof}{\noindent\emph{Proof of claim.} }
\def\cqedsymbol{\ifmmode$\lrcorner$\else{\unskip\nobreak\hfil
\penalty50\hskip1em\null\nobreak\hfil$\lrcorner$
\parfillskip=0pt\finalhyphendemerits=0\endgraf}\fi}
\title[Connected greedy colourings of perfect graphs]{Connected greedy colourings of perfect graphs and other classes: the good, the bad and the ugly}
\author[Laurent Beaudou et. al]{Laurent Beaudou\affiliationmark{1}
  \and Caroline Brosse\affiliationmark{1,2}\thanks{This author is supported by the French Agence Nationale de la Recherche under contract Digraphs ANR-19-CE48-0013-01.}~
  \and Oscar Defrain\affiliationmark{1,3}
  \and Florent Foucaud\affiliationmark{1}
    \and Aurélie Lagoutte\affiliationmark{1,4}
      \and Vincent Limouzy\affiliationmark{1}
        \and Lucas Pastor\affiliationmark{1}
  }
\affiliation{
  Universit\'e Clermont Auvergne, CNRS, Clermont Auvergne INP, Mines Saint-\'Etienne, LIMOS, Clermont-Ferrand, France\\
CNRS, Université C\^{o}te d’Azur, Inria, I3S, Sophia-Antipolis, France\\
Aix Marseille Université, Université de Toulon, CNRS, LIS, Marseille, France\\
Univ. Grenoble Alpes, CNRS, Grenoble INP, G-SCOP, Grenoble, France
}
\keywords{connected greedy colouring, perfect graphs, comparability graphs, $K_4$-minor-free graphs, block graphs}
\begin{document}
\publicationdata{vol. 25:2}{2023}{25}{10.46298/dmtcs.8715}{2021-11-17; 2021-11-17; 2022-11-18; 2023-11-14}{2023-11-19}

\maketitle

\begin{abstract}
  The Grundy number of a graph is the maximum number of colours
  used by the ``First-Fit'' greedy colouring algorithm over all vertex
  orderings.
  Given a vertex ordering $\sigma: v_1,\dots,v_n$, the ``First-Fit'' greedy colouring algorithm colours the vertices in the order of $\sigma$ by assigning to each vertex the smallest colour unused in its neighbourhood.
  By restricting this procedure to vertex orderings that are connected, we obtain {\em connected greedy colourings}.
  For some graphs, all connected greedy colourings use exactly
  $\chi(G)$ colours; they are called {\em good graphs}. On the
  opposite, some graphs do not admit any connected greedy
  colouring using only $\chi(G)$ colours; they are called {\em
    ugly graphs}. 
    We show that no perfect graph is ugly.
    We also give simple proofs of this fact for subclasses of perfect graphs (block graphs, comparability graphs), and show that no $K_4$-minor-free graph is ugly. Moreover, our proofs are constructive, and imply the existence of polynomial-time algorithms to compute good connected orderings for these graph classes.
\end{abstract}

\section{Introduction}

Optimally colouring a graph has been and remains a hard task:
Karp~\cite{karp} lists the \textsc{Chromatic Number} problem among
his twenty-one \textsc{NP}-hard problems in 1972. Facing hard
problems, a common tactic consists in solving them for subclasses
of graphs, but even deciding if a planar graph of maximum degree 4
admits a 3-colouring is an \textsc{NP}-complete
problem~\cite[Section 2]{garey_johnson_stockmeyer}. To deal with
graph colourings and their applications, heuristics have 
been
designed. Greedy colouring, also called ``First-Fit'', is among the first heuristics that come
to mind.

\paragraph{Greedy colouring.}
A {\em greedy colouring} of a graph $G$ relative to an ordering
$\sigma: v_1,v_2,\ldots,v_n$ of its vertices is obtained by
colouring the vertices in the order of $\sigma$ and assigning to
each vertex the smallest positive integer that is unused in its neighbourhood.
Let $\chi(G)$ denote the chromatic number of the graph $G$ and let
$\chi(G,\sigma)$ denote the number of colours used when
colouring $G$ greedily with respect to the ordering $\sigma$. Since any greedy
colouring is proper (no two adjacent vertices have the same
colour), we may observe that $\chi(G) \leq \chi(G,\sigma)$ for any
ordering $\sigma$ of vertices of $G$. Actually the chromatic
number is always attained by some ordering (we call such orderings \emph{good}). By noting
$\mathcal{S}(G)$ the set of orderings on the vertices of $G$, we
have
\begin{equation}
  \chi(G) = \min \{ \chi(G,\sigma): \sigma \in \mathcal{S}(G)\}.
\end{equation}
To see this, it is enough to consider an optimal colouring of $G$,
thus using colours $\{1,\ldots,\chi(G)\}$ and take any ordering
$\sigma$ which ranks vertices with respect to their colours (first
all the vertices coloured with 1, then with 2 and so on). Following
this order, no vertex receives a colour strictly larger than the
one assigned by the optimal colouring.

\paragraph{Grundy number.}
Although greedy colourings have a chance to perform well, choosing
$\sigma$ with no care could lead to bad choices. The {\em Grundy
  number} of a graph $G$, denoted by $\Gamma(G)$, is a measure of
the worst possible choice among greedy colourings. It is the
largest number of colours used among all greedy colourings:
\begin{equation}
  \Gamma(G) := \max \{ \chi(G,\sigma): \sigma \in \mathcal{S}(G)\}.
\end{equation}
Greedy colourings have been called Grundy colourings by several
authors referring to a note on combinatorial games by
Grundy~\cite{grundy} from 1939. Forty years later, Christen and
Selkow~\cite{cographs} introduced the Grundy number. They proved
that for a graph $G$, we have $\Gamma(H) = \chi(H)$ for all
induced subgraphs $H$ of $G$ if and only if $G$ is a cograph. Note
that the Grundy number of a graph may be arbitrarily larger than
its chromatic number (for any fixed $n$, removing a perfect
matching from the complete bipartite graph $K_{n,n}$ yields a
graph $G_n$ for which $\chi(G_n) = 2$ and $\Gamma(G_n) = n$).

\paragraph{Connected orderings.}
An ordering $\sigma: v_1,v_2,\ldots,v_n$ of the vertices of a (connected)
graph $G$ is called a {\em connected ordering} if for each integer
$i$ between 1 and $n$, the subgraph induced by the vertices
$v_1,\ldots,v_i$ is connected. Greedy colourings using these
connected orderings have been studied about thirty years ago by Hertz and De
Werra~\cite{HdW89} and by Babel and Tinhofer~\cite{BT94}.  Let
$\mathcal{S}_c(G)$ be the set of connected orderings of a graph
$G$ and define the {\em connected greedy chromatic number} of a
 connected
graph $G$, denoted $\chi_c(G)$, as the minimum number of colours
used for connected orderings:
\begin{align*}
  \chi_c(G) := \min \{ \chi(G,\sigma): \sigma \in \mathcal{S}_c(G)\}.
\end{align*}
In general, $\chi_c(G)$ is not equal to $\chi(G)$; see~\cite[Theorem 2]{latin14}. We
similarly define the {\em connected Grundy number} of a connected graph
$G$, denoted $\Gamma_c(G)$, as the maximum number of colours for
connected orderings:
\begin{align*}
  \Gamma_c(G) := \max \{ \chi(G,\sigma): \sigma \in \mathcal{S}_c(G)\}.
\end{align*}
Note that for any connected graph $G$, we have the following chain of
inequalities:
\begin{align*}
  \chi(G) \leq \chi_c(G) \leq \Gamma_c(G) \leq \Gamma(G).
\end{align*}
Benevides, Campos, Dourado, Griffiths, Morris, Sampaio and
Silva~\cite{latin14} have recently proven that $\chi_c(G)$ cannot
be arbitrarily large with respect to $\chi(G)$. The difference can
be at most 1: $\chi_c(G) \leq \chi(G)+1$, see~\cite[Theorem 3]{latin14}.

\paragraph{Introducing the good, the bad and the ugly.}
Following the terminology of Le and Trotignon~\cite{clawfree}, we call a connected
graph $G$ satisfying $\chi(G)=\Gamma_c(G)$ \emph{good}, that is, $G$ is a
graph for which any connected ordering is good (gives an optimal
colouring).\footnote{Actually, in~\cite{clawfree} a graph $G$ is
  called good only if $\chi(H)=\Gamma_c(H)$ for every connected
  induced subgraph $H$ of $G$. In this paper, we consider only
  hereditary classes of graphs and we are interested in
  determining whether all graphs in the class are good or not, so
  this difference in the definition is irrelevant in our
  context.} All other connected graphs are called \emph{bad}.\footnote{Bad
  graphs were called \emph{slightly hard-to-colour}
  in~\cite{BT94}.} 
A connected graph $G$ for which \emph{no} connected ordering achieves
the optimal value $\chi(G)$, i.e.~$\chi_c(G)>\chi(G)$, is called \emph{ugly}.\footnote{Ugly
  graphs were called \emph{globally hard-to-colour} in~\cite{BT94},
  but we prefer to follow the lines of the less lengthy terms
  of~\cite{clawfree}.} For any connected graph, a connected ordering of its vertices that yields an optimal colouring is called a \emph{good connected ordering}.

\paragraph{Known results.} It can be
observed that all bipartite graphs are
good~\cite{latin14}. In~\cite{HdW89}, Hertz and De Werra showed that all
fish-free parity graphs are good. The fish and the gem graphs are bad; see Figure~\ref{fig:fish}. In~\cite{clawfree}, the authors
characterized good claw-free graphs in terms of forbidden induced
subgraphs.

\begin{figure}[ht!]
  \centering
  \scalebox{0.6}{\begin{tikzpicture}
  \begin{scope}
        \node[customnode](1) at (0,0) {};
        \draw (1)+(0,-0.5) node {$v_1$};
        \draw (1)+(0,2) node[customnode](2) {};
        \draw (2)+(0,0.5) node {$v_2$};
        \draw (1)+(1,1) node[customnode](3) {};
        \draw (3)+(0,0.5) node {$v_3$};
        \draw (3)+(1,1) node[customnode](4) {};
        \draw (4)+(0,0.5) node {$v_4$};
        \draw (4)+(1,-1) node[customnode](5) {};
        \draw (5)+(0.5,0) node {$v_5$};
        \draw (4)+(0,-2) node[customnode](6) {};
        \draw (6)+(0,-0.5) node {$v_6$};
        \draw (1)--(2)--(3)--(1);
        \draw (3)--(4)--(5)--(6);
        \draw (3)--(6)--(4);
        
    \end{scope}

    \begin{scope}[xshift=8cm]
        
        \node[customnode](1) at (0,0) {};
        \draw (1)+(0,-0.5) node {$v_1$};
        \draw (1)+(-2,0) node[customnode](2) {};
        \draw (2)+(0,-0.5) node {$v_2$};
        \draw (1)+(2,0) node[customnode](3) {};
        \draw (3)+(0,-0.5) node {$v_3$};
        \draw (1)+(-1,2) node[customnode](4) {};
        \draw (4)+(0,0.5) node {$v_4$};
        \draw (1)+(1,2) node[customnode](5) {};
        \draw (5)+(0,0.5) node {$v_5$};
        \draw (1)--(2)--(4)--(5)--(3)--(1);
        \draw (4)--(1)--(5);
    \end{scope}
        
  \end{tikzpicture}}
  \caption{The fish and the gem, two bad graphs discovered in \cite{BT94} and~\cite{HdW89} (bad connected vertex-orderings are $v_1,\ldots,v_n$).}\label{fig:fish}  
\end{figure}
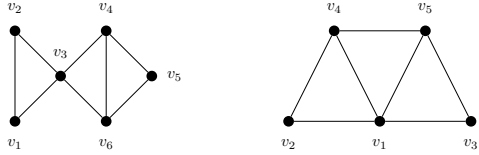

A planar cubic ugly graph was presented in~\cite{BT94}; see
Figure~\ref{fig:ugly}. A claw-free ugly graph was also found
in~\cite{clawfree} (in fact it is a line graph of a multigraph), and it can be modified to obtain an ugly line graph, see Figure~\ref{fig:ugly-clawfree}. These examples have triangles, but one can obtain ugly graphs of arbitrarily large girth. Indeed, the building blocks of these examples are gadgets (here, diamonds) in which two specified vertices must receive the same colour in any optimal colouring, and such gadgets of arbitrarily large girth can be obtained by taking colour-edge-critical graphs of large girth and deleting an edge (the two endpoints of that edge now need to receive the same colour in any optimal colouring).

\begin{figure}[ht!]
  \centering
  \scalebox{0.5}{\begin{tikzpicture}

      \node[customnode](0) at (0.5,0) {};
      \node[customnode](1) at (1,1) {};
      \node[customnode](2) at (1,-1) {};
      \node[customnode](3) at (2,1.5) {};
      \node[customnode](4) at (2,0.5) {};
      \node[customnode](5) at (2,-0.5) {};
      \node[customnode](6) at (2,-1.5) {};
      \node[customnode](7) at (3,1) {};
      \node[customnode](8) at (3,-1) {};

      \draw (0)--(1)--(3)--(4)--(1);
      \draw (0)--(2)--(5)--(8);
      \draw (3)--(7)--(4); 
      \draw (7)--(8)--(6); 
      \draw (2)--(6)--(5);

      \node[customnode](9) at (-0.5,0) {};
      \node[customnode](10) at (-1,1) {};
      \node[customnode](11) at (-1,-1) {};
      \node[customnode](12) at (-2,1.5) {};
      \node[customnode](13) at (-2,0.5) {};
      \node[customnode](14) at (-2,-0.5) {};
      \node[customnode](15) at (-2,-1.5) {};
      \node[customnode](16) at (-3,1) {};
      \node[customnode](17) at (-3,-1) {};   

      \draw (0)--(9)--(10)--(12)--(13)--(10);
      \draw (9)--(11)--(14)--(17);
      \draw (12)--(16)--(13); 
      \draw (16)--(17)--(15); 
      \draw (11)--(15)--(14);
     
  \end{tikzpicture}}
  \caption{An ugly planar cubic graph from~\cite{BT94}.}\label{fig:ugly}  
\end{figure}
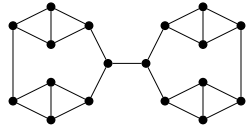

\begin{figure}[ht!]
  \centering
\scalebox{0.4}{\begin{tikzpicture}

\begin{scope}

    \node[customnode](1a) at (0.5,0) {};
    \draw (1a)+(-1,0) node[customnode](1c) {};
    \draw (1c)--(1a);
    
    \draw (1a)+(1,1) node[customnode](2a) {};
    \draw (2a)+(0.25,1.5) node[customnode](3a) {};
    \draw (2a)+(1.5,0.25) node[customnode](4a) {};
    \draw (3a)+(1,1) node[customnode](5a) {};
    \draw (4a)+(1,1) node[customnode](6a) {};
    \draw (2a)+(1.5,1.5) node[customnode](7a) {};
    \draw (2a)+(2.25,2.25) node[customnode](8a) {};
    \draw (1a)--(2a)--(3a)--(5a);
    \draw (2a)--(4a)--(6a);
    \draw (3a)--(4a);
    \draw (5a)--(7a)--(6a)--(8a)--(5a) (7a)--(8a);

    \draw (1a)+(1,-1) node[customnode](2b) {};
    \draw (2b)+(0.25,-1.5) node[customnode](3b) {};
    \draw (2b)+(1.5,-0.25) node[customnode](4b) {};
    \draw (3b)+(1,-1) node[customnode](5b) {};
    \draw (4b)+(1,-1) node[customnode](6b) {};
    \draw (2b)+(1.5,-1.5) node[customnode](7b) {};
    \draw (2b)+(2.25,-2.25) node[customnode](8b) {};
    \draw (1a)--(2b)--(3b)--(5b);
    \draw (2b)--(4b)--(6b);
    \draw (3b)--(4b);
    \draw (5b)--(7b)--(6b)--(8b)--(5b) (7b)--(8b);

    \draw (2b)--(2a);

    \draw (1c)+(-1,1) node[customnode](2c) {};
    \draw (2c)+(-0.25,1.5) node[customnode](3c) {};
    \draw (2c)+(-1.5,0.25) node[customnode](4c) {};
    \draw (3c)+(-1,1) node[customnode](5c) {};
    \draw (4c)+(-1,1) node[customnode](6c) {};
    \draw (2c)+(-1.5,1.5) node[customnode](7c) {};
    \draw (2c)+(-2.25,2.25) node[customnode](8c) {};
    \draw (1c)--(2c)--(3c)--(5c);
    \draw (2c)--(4c)--(6c);
    \draw (3c)--(4c);
    \draw (5c)--(7c)--(6c)--(8c)--(5c) (7c)--(8c);

    \draw (1c)+(-1,-1) node[customnode](2d) {};
    \draw (2d)+(-0.25,-1.5) node[customnode](3d) {};
    \draw (2d)+(-1.5,-0.25) node[customnode](4d) {};
    \draw (2d)+(-1.5,-1.5) node[customnode](7d) {};
    \draw (2d)+(-2.25,-2.25) node[customnode](8d) {};
    \draw (3d)+(-1,-1) node[customnode](5d) {};
    \draw (4d)+(-1,-1) node[customnode](6d) {};
    \draw (1c)--(2d)--(3d)--(5d);
    \draw (2d)--(4d)--(6d);
    \draw (3d)--(4d);
    \draw (5d)--(7d)--(6d)--(8d)--(5d) (7d)--(8d);

    \draw (2c)--(2d);
    
    \draw (0,-5) node {{\LARGE(a)}};

\end{scope}

\begin{scope}[xshift=11cm]
    \node[customnode](1a) at (0.5,0) {};
    \draw (1a)+(-1,0) node[customnode](1c) {};
    \draw (1c)--(1a);
    
    \draw (1a)+(1,1) node[customnode](2a) {};
    \draw (2a)+(0.25,1.5) node[customnode](3a) {};
    \draw (2a)+(1.5,0.25) node[customnode](4a) {};
    \draw (3a)+(1,1.25) node[customnode](5a) {};
    \draw (4a)+(1.25,1) node[customnode](6a) {};
    \draw (2a)+(1.25,1.75) node[customnode](7a) {};
    \draw (2a)+(2.25,2.75) node[customnode](8a) {};
    \draw (2a)+(1.75,1.25) node[customnode](9a) {};
    \draw (2a)+(2.75,2.25) node[customnode](10a) {};
    \draw (2a)+(2,2) node[customnode](11a) {};
    \draw (1a)--(2a)--(3a)--(5a);
    \draw (2a)--(4a)--(6a);
    \draw (3a)--(4a);
    \draw (5a)--(7a)--(8a)--(5a) (6a)--(9a)--(10a)--(6a) (9a)--(7a)--(11a)--(8a) (9a)--(11a)--(10a)--(8a);

    \draw (1a)+(1,-1) node[customnode](2b) {};
    \draw (2b)+(0.25,-1.5) node[customnode](3b) {};
    \draw (2b)+(1.5,-0.25) node[customnode](4b) {};
    \draw (3b)+(1,-1.25) node[customnode](5b) {};
    \draw (4b)+(1.25,-1) node[customnode](6b) {};
    \draw (2b)+(1.25,-1.75) node[customnode](7b) {};
    \draw (2b)+(2.25,-2.75) node[customnode](8b) {};
    \draw (2b)+(1.75,-1.25) node[customnode](9b) {};
    \draw (2b)+(2.75,-2.25) node[customnode](10b) {};
    \draw (2b)+(2,-2) node[customnode](11b) {};
    \draw (1a)--(2b)--(3b)--(5b);
    \draw (2b)--(4b)--(6b);
    \draw (3b)--(4b);
    \draw (5b)--(7b)--(8b)--(5b) (6b)--(9b)--(10b)--(6b) (9b)--(7b)--(11b)--(8b) (9b)--(11b)--(10b)--(8b);

    \draw (2b)--(2a);

    \draw (1c)+(-1,1) node[customnode](2c) {};
    \draw (2c)+(-0.25,1.5) node[customnode](3c) {};
    \draw (2c)+(-1.5,0.25) node[customnode](4c) {};
    \draw (3c)+(-1,1.25) node[customnode](5c) {};
    \draw (4c)+(-1.25,1) node[customnode](6c) {};
    \draw (2c)+(-1.25,1.75) node[customnode](7c) {};
    \draw (2c)+(-2.25,2.75) node[customnode](8c) {};
    \draw (2c)+(-1.75,1.25) node[customnode](9c) {};
    \draw (2c)+(-2.75,2.25) node[customnode](10c) {};
    \draw (2c)+(-2,2) node[customnode](11c) {};
    \draw (1c)--(2c)--(3c)--(5c);
    \draw (2c)--(4c)--(6c);
    \draw (3c)--(4c);
    \draw  (5c)--(7c)--(8c)--(5c) (6c)--(9c)--(10c)--(6c) (9c)--(7c)--(11c)--(8c) (9c)--(11c)--(10c)--(8c);

    \draw (1c)+(-1,-1) node[customnode](2d) {};
    \draw (2d)+(-0.25,-1.5) node[customnode](3d) {};
    \draw (2d)+(-1.5,-0.25) node[customnode](4d) {};
    \draw (2d)+(-1.25,-1.75) node[customnode](7d) {};
    \draw (2d)+(-2.25,-2.75) node[customnode](8d) {};
    \draw (3d)+(-1,-1.25) node[customnode](5d) {};
    \draw (4d)+(-1.25,-1) node[customnode](6d) {};
    \draw (2d)+(-1.75,-1.25) node[customnode](9d) {};
    \draw (2d)+(-2.75,-2.25) node[customnode](10d) {};
    \draw (2d)+(-2,-2) node[customnode](11d) {};
    \draw (1c)--(2d)--(3d)--(5d);
    \draw (2d)--(4d)--(6d);
    \draw (3d)--(4d);
    \draw (5d)--(7d)--(8d)--(5d) (6d)--(9d)--(10d)--(6d) (9d)--(7d)--(11d)--(8d) (9d)--(11d)--(10d)--(8d);

    \draw (2c)--(2d);
    
    \draw (0,-5) node {{\LARGE(b)}};

\end{scope}

\end{tikzpicture}}
\caption{(a) An ugly planar claw-free graph from~\cite{clawfree}.\qquad (b) An ugly planar line graph.}\label{fig:ugly-clawfree}  
\end{figure}
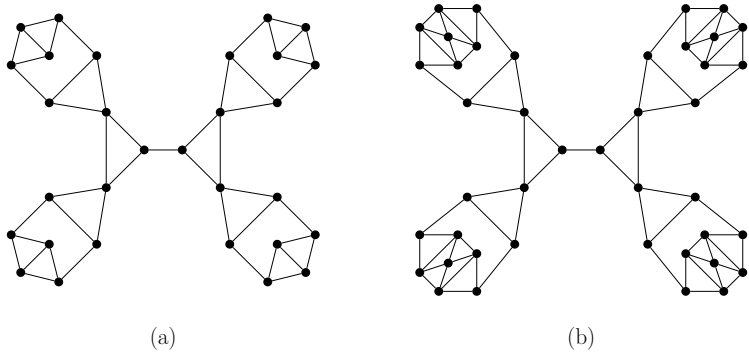

Clearly, every ugly graph is bad. It is
co\textsc{NP}-hard to recognize ugly graphs~\cite{latin14}, even
for inputs that are line graphs, or $H$-free with $H$ not a linear forest, or $H$
containing an induced $P_9$~\cite{H-free}. (This implies that for
any such $H$, there exist $H$-free ugly graphs.) On the other hand, it is proved in~\cite{H-free} that for any $H$ that is an induced subgraph of $P_4+K_1$ or $P_5$, there are no $H$-free ugly graphs.

A graph is \emph{perfect} if for each of its induced subgraphs, the chromatic number equals the clique number (size of the largest clique). Colouring perfect graphs has been studied for decades. For some subclasses of perfect graphs, the classic colouring algorithms actually work greedily on a connected ordering. For example, an ordering $\sigma$ of the vertices of a graph $G$ is \emph{perfect} if for every induced subgraph $H$ of $G$, the
sub-ordering $\sigma_H$ of $\sigma$ induced by $V(H)$ gives
$\chi(H,\sigma)=\chi(H)$~\cite{perfectly-orderable}. Graphs with such orderings are called \emph{perfectly orderable}; they are perfect and include all chordal graphs and all comparability
graphs. An ordering of $G$ is
called a \emph{perfect elimination ordering} if for every $i$ with
$i<n$, the neighbours of $v_i$ among $\{v_{i+1},\ldots,v_n\}$ form
a clique in $G$. A graph is known to be chordal if and only if it
admits a perfect elimination ordering, and such an ordering may be
found in linear time $O(m+n)$ (where $n$ is the number of vertices and $m$ is the number of edges) as the reversed order of the LexBFS algorithm~\cite{chordal}. It follows that for a graph $G$, there exists a perfect
elimination ordering $\sigma$ of $V(G)$ whose reverse $\sigma'$ is
connected (since it corresponds to a BFS order) and gives
$\chi_c(G,\sigma')=\chi(G)$, and this is also a perfect ordering. Hence no chordal graph is ugly. An extension of this concept is the one of a \emph{semi-perfect elimination
  ordering} (see~\cite{semiperfect} for a definition). It is
proved in~\cite{semiperfect} that every vertex-ordering of $V(G)$
that is the reverse of a LexBFS ordering (and thus, connected) is a semi-perfect
elimination ordering if and only if $G$ has no house, no hole and
no domino as an induced subgraph. Such graphs are called
\emph{HHD-free graphs}; all chordal graphs and
distance-hereditary graphs are HHD-free. It was proved
in~\cite{semiperfect-are-perfect} that the reverse of a
semi-perfect elimination ordering is a perfect ordering, and thus, no HHD-free graph is ugly.
Yet another larger class of perfect graphs (containing HHD-free graphs) with connected orderings is the class of \emph{Meyniel graphs}
(graphs where every odd cycle of length at least~$5$ has at least
two chords). 
In~\cite{meyniel}, a LexBFS-like $O(n^2)$ algorithm that 
produces a connected ordering $\sigma$ of
the vertices such that $\chi(G,\sigma)=\chi(G)$ is given.

Recently, connected greedy edge-colourings (equivalently, connected greedy colourings of line graphs) have been studied in~\cite{bonamy_21}, and it was proved that there is no line graph of a bipartite graph that is ugly.\footnote{Moreover, a careful analysis of the proof of~\cite{bonamy_21} gives an algorithm running in time $O(n^4)$ to compute a good connected ordering of any connected line graph of bipartite graph on $n$ vertices.} Such graphs are perfect.

\paragraph{Our results.} 
In this paper, we continue the hunt for graph classes containing only good connected graphs, and for graph classes containing no ugly graphs. For a graph class of the latter type, given a graph $G$ of this class, we note that deciding whether $\chi_c(G)=\chi(G)$ is trivially polynomial-time solvable (always say ''yes''). Thus our work is related to the algorithmic work from~\cite{H-free}.

We first show how to inductively create new good graphs out of good graphs in Section~\ref{sec:great}, with an application to cactus graphs and block graphs (a \emph{block graph} is a graph in which every
biconnected component induces a clique, and a \emph{cactus graph} is a
graph in which every biconnected component induces a cycle). Using the inductive structure of $K_4$-minor-free graphs in a similar manner, we then show (constructively) in Section~\ref{sec:SP} that no member of this class is ugly. We also show constructively that no comparability graph is ugly in Section~\ref{sec:comp}. Finally, our main theorem is to generalize several known results about subclasses of perfect graphs by showing that no perfect graph is ugly (and a good connected ordering of a perfect graph can be computed in polynomial time, using an algorithm for perfect graph colouring as a black box). This is done in Section~\ref{sec:perfect}.

\paragraph{Definitions and notations.} For standard definitions and notations of graph theory that 
are not recalled in this article, we refer the reader to \cite{diestel}. 
A \emph{(proper) $k$-colouring} of a graph $G=(V,E)$ is a function $c:V\to \{1, \ldots, k\}$ such that $c(u)\neq c(v)$ whenever $uv\in E$. A graph is \emph{$k$-colourable} if it admits a proper $k$-colouring. Its \emph{chromatic number} $\chi(G)$ is the smallest integer $k$ such that $G$ is $k$-colourable, and we call \emph{optimal} colouring any $\chi(G)$-colouring. A graph $G$ is \emph{$k$-chromatic} if $k=\chi(G)$.
A graph is \emph{bipartite} if it is 2-colourable. 

Given a graph $G=(V,E)$ and a vertex $v$, the \emph{neighbourhood} of $v$ is the set $N(v)=\{u\in V \ | \ uv\in E\}$, and we call \emph{neighbours} of $v$ the elements in such a set. For a set of vertices $S$, we denote $N(S)=\left(\cup_{v\in S} N(v)\right) \setminus S$.
A \emph{clique} in $G$ is a set of pairwise adjacent vertices and a \emph{$k$-clique} is a clique of size $k$. An \emph{independent set}
is a set of pairwise non-adjacent vertices.
Given a subset of vertices $X$, the \emph{subgraph induced by} $X$, denoted $G[X]$, is the graph $(X, E\cap(X\times X))$ obtained from $G$ by removing the vertices that are not in $X$. On the other hand $G-X$ is the subgraph $G[V\setminus X]$ induced by $V\setminus X$. When $X=\{v\}$ for some $v\in V$, we may write $G-v$ instead of $G-\{v\}$. A \emph{subgraph} of $G$ is a graph $G'=(V',E')$ such that $V'\subseteq V$ and $E'\subseteq E$.
An \emph{orientation} of $G$ is a directed graph obtained from $G$ by transforming each edge $uv\in E$ into either the arc $u\to v$ or the arc $v\to u$. An orientation is \emph{acyclic} if it contains no directed cycle.

\section{Making good graphs out of good graphs}\label{sec:great}

In this section, we show a natural way of building new good
graphs by gluing them through a cut-vertex. 
A vertex $v$ in a connected graph $G$ is a \emph{cut-vertex} if its removal disconnects $G$. A \emph{biconnected} graph is a connected graph  without any cut-vertex. A \emph{biconnected component} of a graph $G$ is an inclusion-wise maximal set of vertices inducing a biconnected graph.

In order to get the desired result, we will need to
strengthen the hypothesis and introduce for that purpose great graphs. A {\em great graph} is a
connected graph $G$ such that for every connected ordering
$\sigma:v_1, \ldots, v_n$ of its vertices and every positive integer
$i$, we may colour vertex $v_1$ with colour $i$, apply the greedy
colouring algorithm to $v_2,\ldots,v_n$ and only use colours
between 1 and $\chi(G)$ among the vertices $v_2,\ldots,v_n$. Of
course, a great graph is also good since $v_1$ may be coloured
$1$. Notice that complete graphs, bipartite graphs and cycles are
great. For great graphs, we can thus find a good connected ordering in linear time using a standard graph traversal algorithm such as Depth-First Search.

\begin{lemma}\label{lemm:extragood-biconnectedcomponents}
If all biconnected components of a connected graph $G$ induce a great graph,
then $G$ is great.
\end{lemma}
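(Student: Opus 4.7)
The plan is to induct on the number of biconnected components of $G$. The base case (one block) is immediate, since $G$ is then itself great by hypothesis. For the inductive step, I would pick a leaf block $B^*$ of the block-cut tree of $G$, with cut-vertex $c$, and consider the graph $G' := G[V(G) \setminus (V(B^*) \setminus \{c\})]$. Then $G'$ is connected, contains $c$, and its biconnected components are precisely the blocks of $G$ other than $B^*$, so they all induce great graphs and $G'$ is itself great by the inductive hypothesis. Note also that $\chi(G) = \max\{\chi(B^*), \chi(G')\}$.

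Now fix a connected ordering $\sigma : v_1, \ldots, v_n$ of $G$ and a positive integer $i$. Let $\tau$ (respectively $\sigma'$) denote the subordering of $\sigma$ induced on $V(B^*)$ (respectively $V(G')$). The key structural observation is that every vertex of $V(B^*) \setminus \{c\}$ has all its $G$-neighbours inside $V(B^*)$, and symmetrically every vertex of $V(G') \setminus \{c\}$ has all its $G$-neighbours inside $V(G')$. From this it follows that the first $V(G')$-vertex appearing in $\sigma$, if it is not $v_1$, must be $c$ (and likewise for $V(B^*)$), which in turn yields that $\tau$ is a connected ordering of $B^*$ and $\sigma'$ is a connected ordering of $G'$, each starting either at $v_1$ or at $c$. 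The same neighbourhood observation also shows that when the greedy algorithm on $\sigma$ colours a vertex of $V(B^*) \setminus \{c\}$, its already-coloured neighbours are exactly those appearing earlier in $\tau$, and similarly for $V(G') \setminus \{c\}$ relative to $\sigma'$. In particular, the $\sigma$-greedy colour of $c$ coincides with the colour it receives from whichever of the two suborderings contains $v_1$.

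To conclude, I would distinguish according to which of $V(B^*)$ or $V(G')$ contains $v_1$ (if $v_1 = c$, both apply). Apply greatness of the side containing $v_1$ to the corresponding subordering, starting by assigning $v_1$ the colour $i$: this colours every other vertex on that side, including $c$, with a colour in $\{1,\ldots,\chi(G)\}$, and in particular assigns to $c$ some colour $\gamma \leq \chi(G)$. Then apply greatness of the other side to its subordering, starting by assigning $c$ the colour $\gamma$: this colours its remaining vertices with colours in $\{1,\ldots,\chi(G)\}$ as well. The two partial colourings glue together into exactly the $\sigma$-greedy colouring that starts with $v_1 \mapsto i$, which proves that $G$ is great. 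The main obstacle is the clean verification that $\tau$ and $\sigma'$ are connected orderings of $B^*$ and $G'$; this rests on the fact that every path in $G$ between $V(B^*) \setminus \{c\}$ and $V(G') \setminus \{c\}$ must pass through $c$, so $\sigma$ can never ``switch sides'' without $c$ already being present in the prefix, and this same fact is what makes the two greedy colourings interact only through the colour of $c$.
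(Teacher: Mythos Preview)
Your proposal is correct and follows essentially the same approach as the paper: induction on the number of biconnected components, decomposition at a cut-vertex, the observation that the induced suborderings are connected and start either at $v_1$ or at the cut-vertex, and the fact that the greedy colouring factors through the colour assigned to the cut-vertex. The only cosmetic difference is that you peel off a single leaf block $B^*$ (a two-way split, where one piece is great by hypothesis and the other by induction), whereas the paper picks an arbitrary cut-vertex $v$ and splits into the pieces $G_i = G[V_i \cup \{v\}]$ for each component $V_i$ of $G-v$, applying the induction hypothesis to every piece; the underlying argument is the same.
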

\begin{proof}
We proceed by induction on the number of biconnected components. Let $v$ be
a cut-vertex of~$G$. Let $G_1,\ldots,G_k$ be the subgraphs induced
by the connected components of $G-v$, together with $v$, meaning that $G_i=G[V_i\cup\{v\}]$ where $V_i$ is a connected component of $G-v$. 
Observe that each $G_i$ has strictly less biconnected components than $G$.
Let
$\sigma:v_1,\ldots, v_n$ be a connected ordering of $G$ and assume
without loss of generality that $v_1$ belongs to $G_1$. We
consider the sub-orderings $\sigma_1,\ldots,\sigma_k$ of $\sigma$,
where $\sigma_i$ contains only the vertices of $G_i$. Note that each ordering
$\sigma_i$ is a connected ordering of $G_i$, starting with $v_1$ if $i=1$, or with $v$ if $i>1$. 
Let us call $c$ 
the greedy colouring relative to
$\sigma$ starting from any colour $\alpha$ on $v_1$.
Then $c$ will agree on each $G_i$ with the greedy colouring relative to $\sigma_i$ starting from $c(v)$ on $v$ if $i>1$, or starting from $\alpha$ on $v_1$ if $i=1$.
Since each
$G_i$ is great by induction hypothesis, 
this colouring
will not use more
than $\max\{\chi(G_i): 1\leq i \leq k\}$ colours, which is equal
to $\chi(G)$.
\end{proof}

\begin{corollary}\label{cor:block-cactus-lineperfect}
    Every connected block graph and every connected cactus graph is great.
\end{corollary}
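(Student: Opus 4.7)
The plan is to invoke Lemma~\ref{lemm:extragood-biconnectedcomponents} directly, reducing the corollary to verifying that every biconnected component of a block graph, respectively of a cactus, induces a great graph. By definition, each biconnected component of a block graph is a clique, and each biconnected component of a cactus is either a bridge (isomorphic to $K_2$) or a cycle of length at least $3$. The remark preceding Lemma~\ref{lemm:extragood-biconnectedcomponents} already records that complete graphs, bipartite graphs and cycles are all great, which covers all the relevant cases (noting that $K_2$ is both a complete graph and bipartite). Combining these observations, Lemma~\ref{lemm:extragood-biconnectedcomponents} immediately yields that connected block graphs and connected cactus graphs are great.

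The only work remaining, if one wishes to be self-contained, lies in justifying that complete graphs and cycles are great, since that claim is asserted in the preceding paragraph without proof. For $K_n$ this is a one-liner: in any connected ordering $v_1,\dots,v_n$ with $v_1$ assigned an arbitrary colour $i$, the pairwise adjacency forces $v_2,\dots,v_n$ to receive $n-1$ pairwise distinct colours, and the greedy rule hands them the $n-1$ smallest positive integers different from $i$, all of which lie in $\{1,\dots,n\}=\{1,\dots,\chi(K_n)\}$.

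The cycle case is the step I expect to be the most delicate. A connected ordering of $C_n$ is necessarily a path extension from $v_1$, so the greedy algorithm processes a growing path and it is only at the very last vertex $v_n$ that both cycle-neighbours of $v_n$ are already coloured. A short parity-based case analysis on $n$ and on the starting colour $i$ shows that $v_n$ can nevertheless always be coloured with some element of $\{1,\dots,\chi(C_n)\}\subseteq\{1,2,3\}$. Once this is accepted, the corollary follows with no further argument, and I would present it as a two-sentence deduction from Lemma~\ref{lemm:extragood-biconnectedcomponents}.
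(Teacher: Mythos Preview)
Your proposal is correct and follows exactly the paper's intended approach: the paper states the corollary with no proof at all, treating it as an immediate consequence of Lemma~\ref{lemm:extragood-biconnectedcomponents} together with the preceding remark that complete graphs, bipartite graphs and cycles are great. Your extra verification that cliques and cycles are great goes beyond what the paper supplies, but is consistent with it.
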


\section{Classes of graphs with no ugly member}

In this section, we exhibit two
classes of non-ugly graphs, i.e.~classes of graphs admitting good connected orderings: the class of $K_4$-minor-free graphs and the
class of perfect graphs. We also give a simple and constructive proof for comparability graphs (which are perfect). Note that there exist bad graphs in these graph classes, consider for example the fish graph, which is $K_4$-minor-free and comparability; see Figure~\ref{fig:fish}.

\subsection{$K_4$-minor-free graphs}\label{sec:SP}

A graph $H$ is a \emph{minor} of $G$ if $H$ can be obtained from 
$G$ by a series of vertex deletions, edge deletions, 
edge contractions (replacing two adjacent vertices $u$, $v$ by a 
single vertex adjacent to all neighbours of $u$ and $v$). A graph 
$G$ is \emph{$K_4$-minor-free} if $K_4$ is not a minor of $G$.

The class of $K_4$-minor-free graphs has been extensively studied in
different contexts and inherited many different names such as Series-Parallel
graphs, partial 2-trees, or graphs with treewidth at most 2 \cite{bod}. We shall observe a simple
fact about those graphs that will help us in the search of a good
ordering. For this we need to define the notion of $2$-tree. A {\em
  $2$-tree} is any graph obtained from $K_3$ and then
repeatedly adding vertices in such a way that each added vertex 
has exactly two neighbours which are adjacent to each other. A graph
$G$ is $K_4$-minor-free if and only if it is a \emph{partial 2-tree}, that is, a subgraph of a
2-tree~\cite{bod}. Such graphs are easily seen to be 3-colourable~\cite{duffin}.

\begin{lemma}
  Any $K_4$-minor-free graph $G$ on at least two vertices has two vertices of degree at most~2.
  \label{lem:deg2}
\end{lemma}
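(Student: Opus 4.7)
The plan is to exploit the 2-tree structure recalled just before the lemma statement. First, I reduce to the connected case: if $G$ is disconnected and some component has at least two vertices, I apply the statement inductively to that component (degree bounds transfer to $G$); otherwise all components are singletons and every vertex has degree~$0$. If $n\leq 3$, every vertex trivially has degree at most~$2$. So I may assume $G$ is connected and $n\geq 4$.

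Since $G$ is a connected partial 2-tree on at least three vertices, it can be extended to a 2-tree $T$ on the same vertex set by iteratively adding edges. As adding edges only increases degrees, it suffices to establish the stronger statement that every 2-tree on $n\geq 3$ vertices has at least two vertices of degree exactly~$2$. For this, I would introduce the \emph{tree of triangles} $\mathcal{T}(T)$ of a 2-tree $T$, whose nodes are the triangles of $T$ and whose edges link two triangles sharing an edge of $T$. A direct induction on the 2-tree construction shows that $\mathcal{T}(T)$ is indeed a tree on $n-2$ nodes. For $n=3$, $T=K_3$ and all three vertices have degree~$2$; for $n\geq 4$, $\mathcal{T}(T)$ has at least two leaves. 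Each leaf triangle $L$ of $\mathcal{T}(T)$ shares exactly one edge with its unique neighbour, and its third vertex $p_L$ lies in no other triangle of $T$; this forces $p_L$ to have exactly two neighbours in $T$. Moreover, distinct leaves yield distinct private vertices, since a vertex lying in two distinct triangles has at least three neighbours in $T$. This produces the two required vertices in $T$, which then have degree at most~$2$ in $G$.

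The main obstacle lies in rigorously justifying two folklore structural facts used above: (i) every connected partial 2-tree on at least three vertices can be extended to a 2-tree on the same vertex set by adding edges, and (ii) the tree of triangles is indeed a tree of the claimed size, with leaves corresponding to private degree-$2$ vertices of $T$. Both statements follow from the recursive definition of 2-trees but must be verified carefully through induction on the construction process.
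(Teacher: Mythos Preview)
Your overall strategy---embed $G$ in a 2-tree $T$ on the same vertex set and locate two degree-$2$ vertices of $T$---is the same as the paper's. The divergence, and the genuine gap, is in how you find those two vertices. Your ``tree of triangles'' $\mathcal{T}(T)$, with triangles as nodes and an edge whenever two triangles share an edge of $T$, is \emph{not} in general a tree. Take the 2-tree on vertices $\{1,2,3,4,5\}$ built by starting from the triangle $123$ and then adding $4$ and $5$, each adjacent to the edge $12$. The three triangles $123$, $124$, $125$ pairwise share the edge $12$, so $\mathcal{T}(T)$ is itself a $3$-cycle. In particular it has no leaves at all, so your mechanism for producing private degree-$2$ vertices breaks down (even though this $T$ visibly has three degree-$2$ vertices, namely $3,4,5$). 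The difficulty you flagged under (ii) is therefore not a verification issue but a genuine falsehood. A correct version would require either choosing a clique tree of the chordal graph $T$ (a specific spanning tree of the clique intersection graph, not the whole graph), or building the tree along a fixed construction sequence of $T$ by attaching each new triangle to one chosen predecessor---and then re-arguing the leaf/private-vertex correspondence in that setting.

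For comparison, the paper bypasses this entirely: after embedding $G$ in a 2-tree $\hat G$, it simply invokes Dirac's theorem that every chordal graph on more than two vertices has two simplicial vertices; since $\hat G$ has clique number~$3$, simplicial vertices have degree at most~$2$ in $\hat G$, hence in $G$.
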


\begin{proof}
  Let $G$ be a $K_4$-minor-free graph. Let $\hat{G}$ be a 2-tree
  having $G$ as a subgraph. Since $\hat{G}$ is chordal and has more than two vertices, it is known to have
  two simplicial vertices~\cite{D61}. Since the maximum clique has order at most 3 in $\hat{G}$, these
  two simplicial vertices have degree at most 2 in
  $\hat{G}$. Therefore, they have degree at most 2 in $G$.
\end{proof}

\begin{lemma}
Any connected $K_4$-minor-free graph $G$ on at least two vertices has a vertex of degree at most 2 whose removal leaves $G$
connected. \label{lem:2conn}
\end{lemma}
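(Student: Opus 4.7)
The plan is to combine Lemma~\ref{lem:deg2} with the block decomposition of $G$. Recall that a vertex of a connected graph is a cut-vertex if and only if it belongs to at least two blocks, and that whenever $G$ has a cut-vertex, its block-tree has at least two leaf blocks, each containing exactly one cut-vertex of $G$.

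First, I would dispatch the easy case where $G$ is itself biconnected (or has only two vertices): then no vertex of $G$ is a cut-vertex, and Lemma~\ref{lem:deg2} directly provides a vertex of degree at most~$2$, which trivially satisfies the conclusion.

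In the remaining case, $G$ admits at least one cut-vertex. I would pick a leaf block $B$ of $G$ and let $v$ denote the unique cut-vertex of $G$ lying in $B$. As a subgraph of a $K_4$-minor-free graph, $B$ is itself $K_4$-minor-free, so Lemma~\ref{lem:deg2} applied to $B$ (or, if $B$ is a single edge, a direct observation) yields a vertex $u \in V(B) \setminus \{v\}$ of degree at most~$2$ in $B$. Since $u$ lies in the single block $B$ of $G$, it is not a cut-vertex of $G$, and moreover its degree in $G$ coincides with its degree in $B$; hence $u$ is the desired vertex.

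The only point requiring care is ensuring that the degree-at-most-$2$ bound transfers from $B$ to $G$, which is precisely why one must start from a \emph{leaf} block rather than an arbitrary block, and why one must avoid the cut-vertex $v$ when selecting $u$ via Lemma~\ref{lem:deg2} (this is where having \emph{two} low-degree vertices from that lemma is essential, rather than just one). The rest is a routine use of the block-tree structure of connected graphs.
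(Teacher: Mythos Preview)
Your proof is correct and takes a genuinely different route from the paper's. The paper argues by contradiction via an extremal principle: assuming every degree-at-most-$2$ vertex is a cut-vertex, it picks one, say $u$, minimizing the size of the smallest component of $G-u$, then applies Lemma~\ref{lem:deg2} to that smallest component to locate another low-degree vertex $z$ whose removal yields an even smaller component, contradicting minimality. Your argument instead uses the block-tree directly: a leaf block $B$ is $K_4$-minor-free, Lemma~\ref{lem:deg2} supplies two low-degree vertices in $B$, and the one distinct from the unique cut-vertex of $B$ is a non-cut-vertex of $G$ with the same degree in $G$ as in $B$. Both proofs hinge on the ``two vertices'' conclusion of Lemma~\ref{lem:deg2}, but yours is more structural and immediately constructive (scan the block-tree for a leaf), whereas the paper's is self-contained in that it avoids invoking block-tree machinery, at the cost of a slightly more delicate minimality-and-contradiction setup.
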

\begin{proof}
  If there is a vertex of degree 1, then its removal definitely leaves
  $G$ connected. So, we may assume that the minimum degree of $G$ is~2.
  
  For a contradiction, suppose that every vertex of degree 2
  disconnects the graph. For each such vertex, let us look at the
  number of vertices in
  the smallest connected component of $G$ after its
  removal: let $u$ be a vertex of degree~2 minimizing this
  quantity. Vertex $u$ has two neighbours $x$ and $y$. Without loss
  of generality, we may assume that in $G-u$, the component containing
  $x$ is the smallest. Let us call the graph induced by this component
  $G_x$. Graph $G_x$ has at least two vertices (or $x$ has degree~1 in
  $G$). By Lemma~\ref{lem:deg2}, $G_x$ has two vertices of degree at
  most~2, one of which is distinct from $x$. Let it be $z$. Observe
  that $z$ has same degree in $G_x$ and in $G$. So $z$ has degree~2 in $G$
  and by our assumption, it must disconnect $G$. One of the connected components after
  removal of $z$ from $G$ has fewer vertices than $G_x$, which is a contradiction to the choice of $u$.
\end{proof}

\begin{theorem}\label{thm:SPnotugly}
  No $K_4$-minor-free graph is ugly, and a good connected ordering of any connected $K_4$-minor-free graph on $n$ vertices can be computed in time $O(n^2)$.
\end{theorem}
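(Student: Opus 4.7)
The plan is to prove the statement by induction on $|V(G)|$, using Lemma~\ref{lem:2conn} as the peeling step and the fact that $\chi(G)\le 3$ for every $K_4$-minor-free graph. The base case of a single vertex is trivial. For the inductive step, I apply Lemma~\ref{lem:2conn} to pick a vertex $v$ of degree at most~$2$ whose removal keeps $G$ connected, recursively compute a good connected ordering $\sigma'$ of the (still $K_4$-minor-free and connected) graph $G-v$, and set $\sigma=(\sigma',v)$. Since $v$ has a neighbour in $G-v$, every prefix of $\sigma$ of length less than $n$ is connected in $G-v$ and hence in $G$, and the full prefix is $G$ itself, so $\sigma$ is a connected ordering.

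By the induction hypothesis, greedy colouring along $\sigma'$ uses at most $\chi(G-v)\le \chi(G)\le 3$ colours. When $v$ is coloured last, its at most two already-coloured neighbours forbid at most two colours, so $v$ receives a colour in $\{1,2,3\}$; this immediately settles the case $\chi(G)=3$. The remaining case $\chi(G)=2$ is the main subtlety, since the generic degree bound only yields a colour up to $3$. I handle it either by invoking the fact recalled in the introduction that bipartite graphs are good (hence not ugly), or, more in the spirit of the induction, by observing that $G-v$ is a connected bipartite graph, so its inductive greedy $2$-colouring must coincide with its unique bipartition up to swapping colours; the (at most two) neighbours of $v$ then all lie in the partite class opposite to that of $v$ and share a colour, leaving a colour in $\{1,2\}$ available for $v$.

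Polynomial-time computability is straightforward: a valid $v$ is found by scanning each vertex of degree at most~$2$ and testing connectivity of $G-v$, and the recursion has depth $n$. I expect the main conceptual obstacle to be precisely the $\chi(G)=2$ case, where the peeling argument alone is off by one and one needs to exploit the rigidity of $2$-colourings of a connected bipartite graph (or, equivalently, the previously known good-ness of bipartite graphs) to close the gap.
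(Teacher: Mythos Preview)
Your proposal is correct and follows essentially the same approach as the paper: induction on $|V(G)|$, peeling off a degree-$\le 2$ vertex via Lemma~\ref{lem:2conn}, and using $\chi(G)\le 3$ to bound the last colour. The only cosmetic difference is that the paper disposes of the bipartite case upfront by invoking that bipartite graphs are good (so any connected ordering works), whereas you fold it into the induction via the uniqueness of the $2$-colouring of a connected bipartite graph; both arguments are valid and lead to the same algorithm.
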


\begin{proof}
  We prove the first part of the statement by induction on the number of vertices of $G$.

  If $G$ has two vertices or less, then it is trivial.
  Let us consider a connected $K_4$-minor-free graph $G$ on $n$ vertices, $n \geq 3$. 
  If $G$ is bipartite, then it is good and hence any
  connected ordering gives a 2-colouring. Let us thus assume that $G$ is
  not bipartite. Since $G$ is 3-colourable~\cite{duffin}, it implies that $G$ has
  chromatic number~3. By Lemma~\ref{lem:2conn}, there is a vertex $u$
  of degree at most~2 whose removal gives a connected graph on $n-1$
  vertices. By the induction hypothesis, there is a connected ordering of
  $G-u$ which yields an optimal colouring of $G-u$. By adding $u$ at the
  end of this ordering, we obtain an optimal colouring of $G$ (since $u$ has
  at most two neighbours, it receives a colour among 1, 2 and 3).
  
  To obtain a good connected ordering in polynomial time, if $G$ is bipartite, one can use any connected ordering (this can be done in linear time using a standard graph traversal algorithm). Otherwise, it suffices to iteratively find a vertex of degree at most~2 whose removal gives a connected graph, and reverse the obtained ordering. This process can be done in $O(n^2)$ time.
\end{proof}

\subsection{Comparability graphs}\label{sec:comp}

This section is devoted to proving that the class of comparability graphs is not ugly.
A graph $G=(V,E)$  is a \emph{comparability graph} if there exists an acyclic orientation of $G$ that is transitive.
An orientation is called \emph{transitive} if it contains the arc $a\rightarrow c$ whenever it contains the arcs $a\rightarrow b$ and $b\rightarrow c$. 
The class of comparability graphs forms a strict subclass of perfect graphs \cite{BOOKgolumbic}.

A comparability graph, together with a transitive orientation, naturally encodes the relations
of a partially ordered set (\emph{poset} for short). A \emph{poset} $P=(V,\prec)$ is a binary relation $\prec$ defined
on a ground set $V$ that is reflexive ($\forall x\in V,\ x\prec x$), 
anti-symmetric ($\forall x,y \in V\times V,\ {(x\prec  y)}\text{ and }(y\prec x) \Rightarrow x=y)$ 
and transitive ($x\prec y\text{ and }y \prec z \Rightarrow x \prec  z$).
It is called a \emph{total order} if $\forall x,y \in V\times V$ we have either $x\prec y$ or $y\prec x$.
A comparability graph and one of its transitive orientations are presented in Figure~\ref{fig:Poset}.
An element $x\in V$ is called \emph{maximal} (resp.~\emph{minimal}) if there exists no element $y\in V\setminus\{x\}$ such that $x \prec y$ (resp.~$y\prec x$). A \emph{chain} in a poset is 
a set of elements that induces a total order. Let us remark that a chain corresponds to a complete graph in the associated comparability graph.

\begin{figure}[h!]
    \begin{center}
    
    \includegraphics[page=1,scale=1]{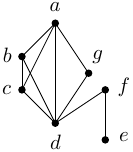}
    \hspace{3cm}
    \includegraphics[page=2,scale=1]{posets}
    \end{center}
    \caption{A comparability graph (left), and a transitive orientation where minimal elements are represented by white disks and maximal elements by white squares (right).}
    \label{fig:Poset}
\end{figure}

We call a subgraph $H$ of a graph $G$ \emph{dominating} if any vertex of $V(G) \setminus V(H)$ has at least one neighbour in $V(H)$.
Moreover, if $H$ is connected, then it is called a \emph{connected dominating subgraph} of $G$. To prove the theorem of this section, we will use the following lemma, which is a special case of~\cite[Proposition~2]{H-free}.

\begin{lemma}[\cite{H-free}]\label{lem:dom-subgraph}
Let $G$ be a graph with an optimal colouring $c$ of $G$ such that there exists a connected
dominating subgraph $H$ and a connected ordering $\sigma_H$ of $V(H)$ such that the 
greedy colouring of $H$ relative to $\sigma_H$
agrees 
 with $c$ on $V(H)$.
Then, $G$ is not ugly.
\end{lemma}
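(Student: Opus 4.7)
The plan is to extend the connected ordering $\sigma_H$ of $V(H)$ to a connected ordering $\sigma$ of $V(G)$ whose greedy colouring uses at most $\chi(G)$ colours; this immediately yields $\chi_c(G)\leq\chi(G)$ and therefore that $G$ is not ugly. Set $k=\chi(G)$.

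First, I would append the vertices of $V(G)\setminus V(H)$ to $\sigma_H$ in nondecreasing order of their value under $c$, breaking ties arbitrarily. The resulting ordering $\sigma$ is connected: the prefix $\sigma_H$ is connected by hypothesis, and once $V(H)$ has been placed, each subsequently appended vertex $v\notin V(H)$ has a neighbour in $V(H)$ because $H$ is a dominating subgraph, so every prefix of $\sigma$ remains connected.

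Second, I would analyse the greedy colouring relative to $\sigma$. By hypothesis it coincides with $c$ on the first $|V(H)|$ positions, so it uses only colours in $\{1,\ldots,k\}$ there. I would then prove by induction along $\sigma$ that every remaining vertex $v$ receives a greedy colour at most $c(v)$. The earlier neighbours of $v$ split into two groups: those in $V(H)$ have been greedy-coloured with their $c$-value, which differs from $c(v)$ since $c$ is a proper colouring of $G$; the earlier non-$H$ neighbours $w$ satisfy $c(w)\leq c(v)$ by the tail-ordering rule, and in fact $c(w)<c(v)$ because the edge $wv$ together with properness of $c$ forbids equality, so by the induction hypothesis their greedy colour is at most $c(w)<c(v)$. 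Hence colour $c(v)$ is not used by any earlier neighbour of $v$ and remains available to greedy, which therefore assigns $v$ some colour at most $c(v)\leq k$, preserving the invariant.

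The main point one has to get right is the choice of tail-ordering: an arbitrary extension could easily force greedy to use a colour strictly larger than $k$ at some vertex, and it is precisely the nondecreasing-$c$-value rule that prevents earlier non-$H$ neighbours of $v$ from sharing (or, by the induction, even reaching) the colour $c(v)$. Once this invariant is in place, every colour used by greedy relative to $\sigma$ lies in $\{1,\ldots,k\}$, giving $\chi_c(G)\leq\chi(G)$ and concluding that $G$ is not ugly.
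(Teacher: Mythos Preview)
Your argument is correct and is essentially the standard one; the paper does not give its own proof of this lemma but simply cites it from~\cite{H-free}, so there is no in-paper proof to compare against. One small point worth making explicit: when you write ``by hypothesis it coincides with $c$ on the first $|V(H)|$ positions'', the hypothesis literally concerns the greedy colouring of $H$, not of $G$. You should note that during these first $|V(H)|$ steps every already-placed neighbour of the current vertex lies in $V(H)$, so the greedy process on $G$ sees exactly the same (or possibly more, if $H$ is not induced) earlier neighbours as the greedy process on $H$; since all of them carry their $c$-values and $c$ is proper on $G$, the colour $c(v)$ is available while all smaller colours were already blocked in $H$, hence the greedy colour in $G$ is again $c(v)$. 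With that remark added, the proof is complete.
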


  An optimal greedy colouring algorithm for colouring comparability graphs is known, see~\cite[Chapter~5.7]{BOOKgolumbic}. This yields an ordering $\sigma$ of the vertices with $\chi(G,\sigma)=\chi(G)$; however $\sigma$ may not be connected. Here we present a connected variant.

\begin{theorem}\label{thm:comp}
No comparability graph is ugly, and a good connected ordering of any connected comparability graph on $n$ vertices and $m$ edges can be computed in time $O(mn)$.
\end{theorem}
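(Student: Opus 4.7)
The plan is to apply Lemma~\ref{lem:dom-subgraph} with $c$ being the \emph{height colouring} of $G$: for each vertex $v$, $c(v)$ equals the length of a longest chain of $P$ ending at $v$. Because comparability graphs are perfect and $\omega(G)$ equals the length of a longest chain in $P$, the colouring $c$ is a proper $\chi(G)$-colouring. My goal is to construct a connected dominating subgraph $H$ of $G$ together with a connected ordering $\sigma_H$ of $V(H)$ such that the greedy colouring of $H$ relative to $\sigma_H$ agrees with $c$ on $V(H)$; then Lemma~\ref{lem:dom-subgraph} will immediately yield that $G$ is not ugly.

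First, I will pick any longest chain $v_1 \prec v_2 \prec \cdots \prec v_k$ of $P$. The set $\{v_1,\ldots,v_k\}$ induces a clique $K_k$ in $G$, and the connected ordering $v_1, v_2, \ldots, v_k$ makes greedy assign $v_i$ the colour $i$, which is $c(v_i)$. Then I extend $H$ iteratively: as long as $V(H)$ is not dominating $G$, I select a vertex $u \in V(G) \setminus V(H)$ that is minimal in the subposet $P[V(G)\setminus V(H)]$ and is adjacent to $V(H)$ in $G$, and I append $u$ to $\sigma_H$. Because $u$ is minimal in $P[V(G) \setminus V(H)]$, all of its below-neighbours in $P$ lie in $V(H)$; combined with the fact that any longest chain ending at $u$ provides a below-neighbour of $u$ at every height $1, 2, \ldots, c(u)-1$, and the inductive invariant that greedy has matched the height colouring on $V(H)$, the colours of $u$'s neighbours in $V(H)$ cover $\{1,\ldots,c(u)-1\}$ and avoid $c(u)$. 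Thus greedy assigns $u$ precisely the colour $c(u)$, preserving the invariant.

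The main obstacle is showing that such a vertex $u$ always exists while $V(H)$ is not yet dominating. By connectedness of $G$ there is always some vertex of $V(G) \setminus V(H)$ adjacent to $V(H)$; the key observation is that if such a vertex is adjacent to $V(H)$ through an \emph{above-neighbour} $z \in V(H)$, then transitivity of $P$ forces every below-neighbour of that vertex in $V(G) \setminus V(H)$ to also be adjacent to $z$, so one may descend in $P[V(G) \setminus V(H)]$ to a minimal element that remains adjacent to $V(H)$. The delicate subcase where every frontier vertex is adjacent to $V(H)$ only through below-neighbours can be treated by choosing $u$ to be the smallest-height vertex of $V(G) \setminus V(H)$ adjacent to $V(H)$: if $u$ were not minimal in $P[V(G) \setminus V(H)]$, then a below-neighbour of $u$ in $V(G) \setminus V(H)$ would be adjacent to $V(H)$ via the same witness and of strictly smaller height, contradicting the choice of $u$. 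Once $V(H)$ is dominating, Lemma~\ref{lem:dom-subgraph} yields the conclusion, and the polynomial running time follows from the polynomial-time computation of heights and a longest chain (via dynamic programming on $P$) together with the fact that each extension step takes polynomial time.
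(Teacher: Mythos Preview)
Your overall plan—use the height colouring $c$ and build a connected dominating $H$ on which greedy reproduces $c$, then invoke Lemma~\ref{lem:dom-subgraph}—is sound, and the invariant you maintain (greedy on $H$ agrees with $c$) is the right one. However, the argument for the ``delicate subcase'' is incorrect, and in fact the selection rule ``$u$ minimal in $P[V(G)\setminus V(H)]$ and adjacent to $V(H)$'' can fail to produce any vertex even when $V(H)$ is not dominating.

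Here is a small counterexample. Let $P$ have ground set $\{a,b,c,d,e\}$ with covering relations $a\prec b\prec c$, $d\prec e$, and $a\prec e$. The heights are $h(a)=h(d)=1$, $h(b)=h(e)=2$, $h(c)=3$, and the comparability graph $G$ has edge set $\{ab,ac,bc,ae,de\}$. Starting from the longest chain $a\prec b\prec c$ you get $V(H)=\{a,b,c\}$. Now $V(G)\setminus V(H)=\{d,e\}$; the only frontier vertex is $e$ (adjacent to $a$), but $e$ is not minimal in $P[\{d,e\}]$ since $d\prec e$, while $d$ has no neighbour in $V(H)$ at all. Your claim that ``a below-neighbour of $u$ in $V(G)\setminus V(H)$ would be adjacent to $V(H)$ via the same witness'' fails here: the witness for $e$ is $a$, but $d$ and $a$ are incomparable, so $d$ is \emph{not} adjacent to $a$. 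Thus your procedure is stuck, even though $V(H)$ does not dominate $G$ (vertex $d$ is undominated). The flaw is that when the witness $z$ is a \emph{below}-neighbour of $u$, transitivity gives you nothing about other elements below $u$.

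For comparison, the paper sidesteps this difficulty by choosing $H$ non-iteratively: after a colour swap, the subgraph induced by the minimal and maximal elements of $P$ is shown to be connected, bipartite, and dominating, so any connected ordering of it starting at a minimal element automatically reproduces the (modified) colouring. This avoids the existence problem entirely. Your iterative approach can likely be repaired (for instance, in the example above, adding $e$ first does preserve the invariant since $a$ already supplies colour~$1$), but you would need a weaker selection criterion than minimality in $P[V(G)\setminus V(H)]$ together with a correct proof that some suitable vertex always exists.
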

\begin{proof}
  It is known~\cite[Chapter~5.7]{BOOKgolumbic} that given a comparability graph $G$ on $n$ vertices and $m$ edges, we can compute in time $O(mn)$
  a partial order $P$ on the vertices of $G$ whose transitive closure yields $G$, as well as a height function $h$ on $P$ defined by $h(v) = 1$ if $v$ is minimal in $P$, and $h(v) = 1 + \max\{h(w)\colon w\prec v\}$ otherwise. 
  This can be done by computing a transitive orientation of $G$ and conducting a Depth-First Search.
  Furthermore, such a function yields an optimal proper colouring of~$G$.
  In the following, by \emph{height} of $P$ we mean $\max_{v\in{V(G)}} h(v)$.

  Let $G$ be a connected comparability graph and $P$ be one of its associated partial orders.
  Clearly the statement holds if the height of $P$ equals $2$, as $G$ is bipartite in that case.
  Let us assume that $P$ is of height $k\geq 3$ and consider the poset $P'$ obtained from $P$ by removing all maximal elements of $P$, as well as the graph $G'$ associated to $P'$ (note that it may not be connected). 
  Since $h$ restricted to $P'$ defines a height function of $P'$, it yields a $(k-1)$-colouring of $G'$.
  We extend it to a colouring of $P$ as follows. First, we colour all maximal elements of $P$ with colour $k$, and then, we swap the colour classes~$2$ and~$k$. 
  Thus, we have obtained an optimal colouring of $G$ where all maximal elements of $P$ are coloured~$2$ and all minimal elements are coloured~$1$. This colouring process is depicted in
  Figure \ref{fig:Posetcolouring}.
  
  Now, observe that the subgraph $H$ induced by the colour classes~$1$ and~$2$ is bipartite and forms a dominating subgraph of $G$, since every element of $P$ that is neither maximal nor minimal is comparable with some maximal and some minimal element.
  
  We furthermore show that $H$ is connected.
  Let us assume that this is not the case and let $A,B$ be two distinct connected components of $H$.
  Since $G$ is connected, there exists some path in $G$ connecting a vertex in $A$ to a vertex in $B$.
  Among all such possible paths between $A$ and $B$, let $Q$ be one with smallest length and call $a\in A$ and $b\in B$ its extremities.
  Clearly we are done if $Q$ is an edge.
  Otherwise, let $x$ be the neighbour of $a$ in $Q$ and $b'$ be its successor (with possibly $b=b'$).
  Two symmetric cases arise depending on whether $a\prec x$ or $x\prec a$.
  Let us assume without loss of generality that $a\prec x$, hence that $a$ is minimal in $P$.
  Since $Q$ is a shortest path, $a$ and $b'$ are non-adjacent, thus incomparable and hence $a\prec x$ and $b'\prec x$.
  Let $x'$ be a maximal element of $P$ such that $x\prec x'$.
  Since $x'\succ a$ we have that $x'$ belongs to $A$ and as $x'\succ b'$ it is connected to $b'$. But then taking $x'b\in E(G)$ and then following the rest of $Q$ from $b'$ to $b$ is a shorter path to reach $b$ from $A$, compared to $Q$ that starts with $ax$ then $xb'$. 
  We have exhibited a path shorter than $Q$ connecting $A$ to $B$, a contradiction to the choice of $Q$.
  
  Now, since $H$ is connected and bipartite, for any connected vertex-ordering of $H$ that starts with a minimal element of $P$, the greedy algorithm produces a colouring that agrees with $c$ on $H$. Hence, we can apply Lemma~\ref{lem:dom-subgraph} to $G$, $c$ and $H$, which shows that $G$ is indeed not ugly.
  As a connected vertex-ordering of $H$ can be obtained in linear time using a standard graph traversal algorithm, and a colouring of $G'$ may be computed in $O(mn)$ time~\cite[Chapter~5.7]{BOOKgolumbic}, we conclude to the desired time bound of $O(mn)$ for the computation of a good connected ordering of $G$.
\end{proof}

\begin{figure}
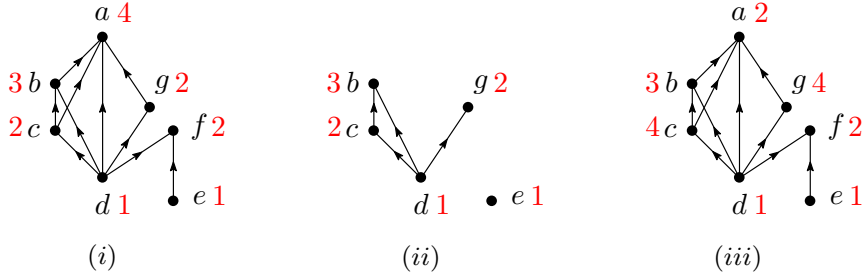

    \centering
    \includegraphics[page=3,scale=1]{posets}
    \hspace{1cm}
    \includegraphics[page=4,scale=1]{posets}
    \hspace{1cm}
    \includegraphics[page=5,scale=1]{posets}
    \caption{The situation of Theorem~\ref{thm:comp}; $(i)$ a colouring obtained with the
    method of Golumbic $(ii)$ the same colouring restricted to non-maximal elements $(iii)$ the colouring obtained by swapping colours $2$ and $k$.}
    \label{fig:Posetcolouring}
\end{figure}

\subsection{Perfect graphs}\label{sec:perfect}

We now prove our main result, that there are no ugly perfect graphs. This generalizes the same fact which was previously proved for Meyniel graphs~\cite{meyniel} (a class which contains chordal graphs, HHD-free graphs, Gallai graphs, parity graphs, distance-hereditary graphs...) and line graphs of bipartite graphs~\cite{bonamy_21}. Our proof is a generalization of the proof of the latter result by Bonamy, Groenland, Muller, Narboni, Pek\'arek and
Wesolek~\cite[Theorem 2]{bonamy_21}, and our presentation is based on theirs.

\begin{theorem}\label{thm:perfectnotugly}
  No perfect graph is ugly.
\end{theorem}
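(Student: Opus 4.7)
The plan is to proceed by strong induction on $\chi(G)$, with the base case $\chi(G)\leq 2$ immediate since bipartite graphs are good. For the inductive step, let $G$ be a connected perfect graph with $\chi(G)=k\geq 3$, and fix an optimal colouring $c$ with colour classes $V_1,\ldots,V_k$.

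The easy scenario is when $G-V_k$ is connected. Being an induced subgraph of a perfect graph, $G-V_k$ is itself perfect with chromatic number at most $k-1$; by the inductive hypothesis it admits a connected ordering $\sigma'$ whose greedy colouring uses at most $k-1$ colours. Appending the vertices of $V_k$ one at a time, always choosing one with at least one neighbour already placed (feasible since $G$ is connected and $V_k$ is independent), each new $v\in V_k$ receives a greedy colour in $\{1,\ldots,k\}$, because its previously-coloured neighbours all lie in $V(G)\setminus V_k$ and thus use only colours in $\{1,\ldots,k-1\}$. The resulting connected ordering of $G$ uses at most $k$ colours, so $G$ is not ugly.

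The main challenge is the case when $G-V_k$ is disconnected for every optimal colouring. Let $G_1,\ldots,G_m$ be its components; each is connected and perfect with chromatic number at most $k-1$, hence by induction admits a good connected ordering. I would then merge these orderings by processing the components along a spanning tree of the auxiliary graph whose nodes are the $G_i$'s and whose edges are witnessed by vertices of $V_k$ adjacent to two distinct components. Between two successive components the corresponding $V_k$-vertex is inserted; being adjacent to an already-placed vertex, it greedily receives a colour in $\{1,\ldots,k\}$.

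The delicate point, and the hardest step, is to ensure that the good connected ordering of each new $G_i$ begins at a vertex adjacent to the bridging $V_k$-vertex just placed. This would require either strengthening the inductive statement (providing, for any perfect graph, a good connected ordering starting at any prescribed vertex), or invoking Lemma~\ref{lem:dom-subgraph} inside each $G_i$ with a carefully chosen dominating subgraph that controls the entry point. In either approach, Kempe-chain exchanges should provide enough flexibility in the choice of optimal colouring to reconcile the inductive colourings of the $G_i$'s with $c$; orchestrating all of this simultaneously, in the spirit of the Bonamy et al.\ argument for line graphs of bipartite graphs, is where the bulk of the structural work lies.
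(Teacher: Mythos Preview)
Your high-level plan --- induction on $\chi(G)$, remove the top colour class, process the components of $G-V_k$ one by one, bridging through vertices of $V_k$ --- is exactly the paper's strategy, and you are right that the induction hypothesis must be strengthened so that the good connected ordering may start at any prescribed vertex. However, you have misidentified where the real difficulty lies.

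The genuine gap is not in \emph{starting} the ordering of $G_{i+1}$ at a neighbour of the bridging vertex $s\in V_k$ (the strengthened induction hypothesis handles that), but in controlling the \emph{colour} that $s$ receives. You only record that $s$ gets some colour in $\{1,\ldots,k\}$. If $s$ receives a colour $j<k$, then vertices of $G_{i+1}$ adjacent to $s$ now see an external neighbour coloured $j$, and the greedy process on $G_{i+1}$ inside $G$ no longer coincides with the good ordering guaranteed by induction on $G_{i+1}$ in isolation; a single such vertex can be pushed to colour $k$, and the argument collapses. Kempe exchanges on the fixed colouring $c$ do not obviously repair this, because the colour that $s$ actually receives is dictated by the greedy colouring of $G_1\cup\cdots\cup G_i$, not by $c$.

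The paper's resolution is to choose the optimal colouring $\varphi$ so that every component-to-component transition can be made along an edge $us$ with $s$ in the top colour class and $us$ lying in a $k$-clique whose other $k-1$ vertices all sit in already-processed components. Those $k-1$ vertices then carry all colours $1,\ldots,k-1$, forcing $s$ to receive colour $k$; since colour $k$ never appears inside any $G_j$, the inductive greedy colouring of $G_{i+1}$ is undisturbed. Proving that such a $\varphi$ exists is the core of the proof: the paper defines a vertex to be \emph{reachable} from the start vertex $v$ via paths that enter colour-$k$ vertices only along edges of $k$-cliques, and shows by a recolouring argument --- using perfectness to $(k-1)$-colour a suitable subgraph of clique number $k-1$ --- that some $\varphi$ makes every vertex reachable. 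This step is precisely what your outline is missing.
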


\begin{proof}
  We will consider only connected perfect graphs, and show that there exists a good connected ordering of their vertices. The proof will use induction on the chromatic number of the
  graphs. As usual for inductive proofs, we shall adapt the
  induction hypothesis: we want it as weak as possible to ease its
  proof, and at the same time as strong as possible since it is
  our basic hypothesis. We shall prove the following
  statement, which implies the theorem.

  \begin{quotation}
    \noindent For any positive integer $k$, any connected $k$-chromatic perfect
    graph $G$ and any vertex $v$ of $G$, there is a
    connected ordering starting with $v$ producing a greedy colouring with
    $k$ colours.
  \end{quotation}

  If $k=1$, then the graph is just a single vertex and the
  statement is true. We now suppose that $k\geq2$ and
  that the induction hypothesis is true for all $k'$ strictly smaller than
  $k$.

  Let $G$ be a connected $k$-chromatic perfect graph with some initial vertex $v$, and $\varphi: V(G) \rightarrow \{1,\ldots,k\}$ be a proper
  $k$-colouring of the vertices of $G$ such that $v$ does not get
  colour $k$ (this is possible since $k\geq2$). For any vertex
  $u$, we say that $v$ {\em reaches} $u$ if there is a path
  $v=s_0,\ldots,s_p=u$ such that for every $1\leq i\leq p$,
  if $\varphi(s_i) = k$, then the edge $s_{i-1}s_i$ is part of a
  $k$-clique. We first prove the following. 

\medskip

  \begin{claim}\label{claim:perfect}
  $G$ has a $k$-colouring such that $v$ reaches all other vertices of $G$.
  \end{claim}
  \claimproof
  Consider a colouring $\varphi$ maximizing the number
  of vertices reached from $v$. Let $A$ be the set of vertices
  reached from $v$ (including $v$) and $B$ the remaining
  vertices. If $B$ is empty, we are done. If not, we build a
  better colouring. In this case, observe that any edge $xy$ from
  $A$ to $B$ must be such that $\varphi(y) = k$ and the edge $xy$
  does not belong to any $k$-clique.

  Let $u$ be some vertex in $B$ 
  such that $u$ has a neighbour in $A$. 
  The graph $G[B]$ induced by $B$ is perfect (since $G$ is
  perfect). Pick any optimal colouring $\rho$ of $G[B]$ such that $u$ does not
  receive colour $k$ and let $S_B$ be the independent set of
  vertices of $B$ getting colour $k$ by $\rho$. Let $S_A$ be the independent set of
  vertices in $A$ getting colour $k$ by $\varphi$. 
  Note that there is no edge between $S_A$ and $B$, so $S_A\cup S_B$ is an independent set: indeed recall that every edge $xy$ with $x\in A$ and $y\in B$ is such that $y$ is coloured $k$ by $\varphi$, but the vertices in $S_A$ are also coloured $k$ by $\varphi$. Thus, $x$ being in $S_A$ would contradict the fact that $\varphi$ is a proper colouring of $G$.

  Since no edge
  between $A$ and $B$ is part of a $k$-clique, each $k$-clique of $G$ is included either in $A$ or in $B$ and thus intersects the set $S_A\cup S_B$. 
  Hence, $G - (S_A \cup S_B)$ has clique number at most $k-1$ and by the perfectness of
  $G$, there is a $(k-1)$-colouring $\gamma$ of $G - (S_A \cup S_B)$. Since $S_A\cup S_B$ is an independent set, we can
  extend $\gamma$ to the whole graph by assigning colour
  $k$ to all vertices in $S_A \cup S_B$. We have that:
  \begin{itemize} 
  \item all vertices in $A$ remain reachable in $\gamma$, as we can consider the same path as for $\varphi$ in $A$, since colour class $k$ is the same in $\varphi$ and $\gamma$;
  \item the vertex $u$ is now reachable, as it has a neighbour in $A$ and is not coloured $k$ by $\gamma$.
  \end{itemize}
  Thus, we have strictly increased the number of reachable vertices,
  which contradicts the choice of $\varphi$. Therefore, there exists $\varphi$ such
  that $v$ reaches the whole graph, and the proof of the claim is complete. \cqedsymbol

  \bigskip

  Let $\varphi$ be a $k$-colouring of $G$ such that $v$ reaches all other vertices of $G$ (obtained from Claim~\ref{claim:perfect}) and let $S$ be the set of vertices coloured $k$ by $\varphi$. The
  graph $G-S$ can be decomposed into connected components
  $C_1,\ldots, C_{\ell}$. Let $C_1$ be the component containing
  $v$. By connectivity of $G$, and after a possible renumbering of
  $C_2, \ldots, C_\ell$, 
  we may find for each index $i$ between $1$ and
  $\ell-1$
  \begin{equation*}
    \text{two vertices } u_i \text{ in } C_1\cup \ldots \cup C_i \text{ and } s_i \text{ in } S \cap N(C_{i+1})
  \end{equation*}
  such that $v$ reaches $s_i$ through $u_i$ (thus, the edge
  $u_is_i$ is part of a $k$-clique).

  Now we can use the induction hypothesis to greedily colour the whole graph $G$ in a connected fashion. Since $C_1$ induces a
  perfect connected graph of chromatic number at most $k-1$,
  by induction, there is a good connected ordering of $C_1$ starting from $v$. This means that
  $u_1$ is coloured. Since $u_1s_1$ is in a $k$-clique, the
  other members of this clique (except $s_1$) are in $C_1$. Thus,
  they use all colours among ${1,\ldots,k-1}$. The greedy colouring
  continuing with $s_1$ will then assign colour $k$ to it. Now, $s_1$ has
  a neighbour in $C_2$. By induction, there is a connected greedy
  $(k-1)$-colouring of $G[C_2]$ starting with colour~1 from any
  vertex, so we can colour $G[C_2]$. We iterate the process through all connected components. At last, we colour the uncoloured vertices of $S$. This process yields a connected greedy $k$-colouring of $G$.
\end{proof}

Note that the proof of Theorem~\ref{thm:perfectnotugly} is constructive; it directly yields an algorithm for finding a good connected ordering of any input connected perfect graph $G$ with $n$ vertices.
But in order to be able to find a good connected ordering of $G$, we must be able to compute a $k$-colouring of $G$ such that $v$ reaches all other vertices of $G$ (Claim~\ref{claim:perfect} in the proof of Theorem~\ref{thm:perfectnotugly}).
This can be done by computing an optimal colouring of $G$ (since $G$ is perfect, a colouring of $G$ using $\omega(G)$ colours can be found in polynomial time $O(n^c)$ for some $c\in\mathbb{N}$ using the ellipsoid method~\cite{grotschel}), and repeatedly applying the argument of the proof of Claim~\ref{claim:perfect} to extend the set of vertices that can be reached from $v$.
This is formalized by Algorithm~\ref{alg:perfect-reached}. 
The size of the maximum clique of Line~\ref{line:omega-computation} is computed in $O(n^c)$ time using the algorithm in~\cite{grotschel}, bringing the total time complexity of Algorithm~\ref{alg:perfect-reached} to $O(n^{c+2})$.

    \begin{algorithm} \label{alg:perfect-reached}
    \DontPrintSemicolon%
    \SetNoFillComment%
    \caption{Given a perfect graph $G$ on $n$ vertices, an optimal colouring of $G$ and a vertex~$v$, compute
    the set of vertices that are reached from $v$. } 
    \KwIn{A perfect graph $G = (V, E)$, an optimal $k$-colouring of the vertices of $G$
    where $k = \omega(G)$ and a vertex $v$ of $G$.}
    \KwOut{The set of vertices reachable from $v$.}
    \label{algo3}
    
    \tcc{Construct the directed graph $D$ of direct reachability from $v$.}
     
     Let $D$ be a directed graph with $V(D)=V(G)$ and no arc.
     
     \For{each edge $uw\in E(G)$}
     {%
        \lIf{neither $u$ nor $w$ is coloured $k$}{Add both $u \rightarrow w$ and $w \rightarrow u$.}
        \Else
        {%
            Without loss of generality, let $u$ be the vertex coloured $k$.
            
            Add the arc $u \rightarrow w$ to $D$.
            
            Let $q = \omega(G[N(u) \cap N(w)])$.\label{line:omega-computation}
            
            \lIf{$q = k-2$}{Add the arc $w \rightarrow u$ to $D$.}
            
        }

     }
     
    Let $A$ be the set of vertices visited during a traversal of $D$ starting from $v$.
     
    \Return{$A$}
    
    \end{algorithm}

Then, Algorithm~\ref{alg:perfect-reached} is used as a sub-routine in Algorithm~\ref{alg:perfect-good-ordering} to compute an optimal connected colouring of $G$, leading to the next corollary.

    \begin{algorithm} \label{alg:perfect-good-ordering}
    \DontPrintSemicolon%
    \SetNoFillComment%
    \caption{Compute an optimal connected colouring of a given perfect graph.}
    \KwIn{A perfect graph $G = (V, E)$.}
    \KwOut{A good connected ordering of the vertices of $G$.}
    
    Let $\varphi : V \rightarrow \mathbb{N}$ be an optimal $k$-colouring of $G$ where $k = \omega(G)$.
    
    \tcc{Computed with a complexity of $|V|^{c}$ for some constant $c\in \mathbb{N}$.}
    
    Let $v$ be a vertex of $V$ such that $\varphi(v)\neq k$ and $A$ be the set of vertices reached from $v$.
    
    \tcc{Computed with Algorithm~\ref{algo3} as a subroutine.}
    
    \While{there exists some vertex not in $A$}
    {%
        Let $u$ be a vertex not in $A$ with a neighbour in $A$ and $B = V \setminus A$.
        
        Compute an optimal colouring $\rho$ of $G[B]$ with at most $k$ colours.
        
        Let $S_A$ (resp. $S_B$) be the independent set of vertices in $A$ (resp. $B$) coloured $k$ by $\varphi$ (resp. coloured $k$ by $\rho$).
        
        Let $G' = G[V \setminus (S_A \cup S_B)]$ and compute an optimal colouring $\gamma$ of $G'$ with
        at most $k-1$ colours.
        
        Extend $\gamma$ by assigning colour $k$ to the vertices in $S_A \cup S_B$.
        
        $\varphi \leftarrow \gamma$
        
        Recompute the set $A$ of vertices reached by $v$ in $\varphi$.
    }
    
    \tcc{From now on, $v$ reaches all the vertices of $G$.}
    Let $S$ be the set of vertices coloured $k$ by $\varphi$ in $G$.
    
    Let $C_1, \dots, C_{\ell}$ be the connected components of $G[V \setminus S]$.
    
    \tcc{Computed in $O(|E|)$.}
    
    
    $v_1 \leftarrow v$
    
    \For{$i$ in  $\{1, \dots, \ell \}$}
    {%
         \tcc{Recursive call :}
         Compute a good connected ordering of $G[C_i]$ starting in $v_i$, using $k-1$ colours; add it to the final connected ordering $\sigma$ to be outputted.
        
        \If{$i\neq \ell$}{ 
        Let $u_i$ in $C_1 \cup \ldots \cup C_i$ and $s_i$ in $S \cap N(C_{i+1})$ be two
        vertices such that $v$ reaches $s_i$ through $u_i$.
        
         Assign colour $k$ to $s_i$ and add it to $\sigma$.
         
         Let $v_{i+1}$ be a neighbour of $s_i$ in $C_{i+1}$.
        }
         
    }

    Add the uncoloured vertices of $S$ to the ordering $\sigma$
    
    \Return{the good connected ordering $\sigma$ of $G$} using $k$ colours.
    
    \end{algorithm}
    
\begin{corollary}
    A good connected ordering of any connected perfect graph on $n$ vertices can be computed in time $O(n^{c+4})$ provided that an optimal colouring of a perfect graph can be obtained in $O(n^c)$ time.
\end{corollary}
    
Note that the time bound of the above corollary relies to date on the complexity of the polynomial-time algorithm from~\cite{grotschel}, whose precise exponent has not been made explicit by the authors and which is most probably large. 
This is in contrast to the algorithm for comparability graphs given by Theorem~\ref{thm:comp} which runs in $O(mn)$ time.
Concerning other subclasses of interest, as mentioned in the introduction, the same task can be done in time $O(m+n)$ for chordal graphs using the LexBFS algorithm~\cite{chordal}, for Meyniel graphs this can be done in time $O(n^2)$ using a variant of LexBFS~\cite{meyniel}, and a careful inspection of the proof in~\cite{bonamy_21} gives an $O(n^4)$ algorithm for line graphs of bipartite graphs.

\paragraph{Acknowledgements.} This research was partially financed by the French government IDEX-ISITE initiative 16-IDEX-0001 (CAP 20-25) and by the ANR project GRALMECO (ANR-21-CE48-0004). We are thankful to all participants of the 2018 AlCoLoCo problem seminars and the 2018 Recolles workshop, where this research was started. In particular, we thank Giacomo Kahn and Armen Petrossian for preliminary discussions. We also thank the referees for their comments that helped improve the paper.

\end{document}